\begin{document}

\title{Hierarchical Semantic RL: Tackling the Problem of Dynamic Action Space for RL-based Recommendations}

\author{Minmao Wang}
\authornote{This work was done during an internship at Kuaishou Technology.}
\affiliation{%
  \institution{Fudan University}
  \city{Shanghai}
  \country{China}
}
\email{mmwang25@m.fudan.edu.cn}

\author{Xingchen Liu}
\affiliation{%
  \institution{Kuaishou Technology}
  \state{Beijing}
  \country{China}
}
\email{liuxingchen07@kuaishou.com}

\author{Shijie Yi}
\affiliation{%
  \institution{Kuaishou Technology}
  \city{Beijing}
  \country{China}
}
\email{yishijie@kuaishou.com}

\author{Likang Wu}
\authornote{Corresponding authors.}
\affiliation{%
  \institution{Tianjin University}
  \city{Tianjin}
  \country{China}
}
\email{wulk@tju.edu.cn}

\author{Hongke Zhao}
\affiliation{%
  \institution{Tianjin University}
  \city{Tianjin}
  \country{China}
}
\email{hongke@tju.edu.cn}

\author{Fei Pan}
\affiliation{%
  \institution{Kuaishou Technology}
  \city{Beijing}
  \country{China}
}
\email{panfei05@kuaishou.com}

\author{Qingpeng Cai}
\authornotemark[2] 
\affiliation{%
  \institution{Kuaishou Technology}
  \city{Beijing}
  \country{China}
}
\email{caiqingpeng@kuaishou.com}

\author{Peng Jiang}
\affiliation{%
  \institution{Kuaishou Technology}
  \city{Beijing}
  \country{China}
}
\email{jiangpeng@kuaishou.com}

\renewcommand{\shortauthors}{Minmao Wang et al.}

\begin{abstract}

Recommender Systems (RS) are fundamental to modern online services. While most existing approaches optimize for short-term engagement, recent work has begun to explore reinforcement learning (RL) to model long-term user value. However, these efforts face significant challenges due to the vast, dynamic action spaces inherent in RS, which hinder stable policy learning. To resolve this bottleneck, we introduce Hierarchical Semantic RL (HSRL), which reframes RL-based recommendation over a fixed Semantic Action Space (SAS). HSRL encodes items as Semantic IDs (SIDs) for policy learning, and maps SIDs back to their original items via a fixed lookup during execution. To align decision-making with SID generation, the Hierarchical Policy Network (HPN) operates in a coarse-to-fine manner, employing hierarchical residual state modeling to refine each level's context from the previous level’s residual, thereby reducing representation–decision mismatch. In parallel, a Multi-level Critic (MLC) provides token-level value estimates, enabling fine-grained credit assignment. Across public benchmarks and a large-scale production dataset from a leading short-video advertising platform, HSRL consistently surpasses state-of-the-art baselines. In online deployment over a 7-day A/B testing, it delivers an 18.421\% ADVV lift and a 1.251\% increase in Revenue, supporting HSRL as a scalable paradigm for RL-based recommendation.

\end{abstract}

\begin{CCSXML}
<ccs2012>
<concept>
<concept_id>10002951.10003317.10003347.10003350</concept_id>
<concept_desc>Information systems~Recommender systems</concept_desc>
<concept_significance>500</concept_significance>
</concept>
</ccs2012>
\end{CCSXML}                

\ccsdesc[500]{Information systems~Recommender systems}
\keywords{Reinforcement Learning, Recommender Systems, Semantic Action Space}

\maketitle

\section{Introduction}

Recommender Systems (RS)~\cite{cr1,cr2,cr3,cr4,cr5,cr6,cr7} have become indispensable infrastructure in the modern digital economy, fundamentally driving growth across e-commerce, content distribution, and online advertising platforms. Their core mission is to achieve highly personalized content delivery by deeply profiling user behavior, ultimately maximizing long-term cumulative user utility. However, prevailing recommendation methodologies predominantly rely on the Supervised Learning (SL) paradigm, modeling recommendation as a static prediction task. The objectives are typically confined to optimizing immediate engagement metrics, such as click-through rate (CTR) or conversion rate~\cite{deepfm,din,mmoe,esmm}. This inherent myopic single-step optimization strategy ignores the sequential nature and long-term dynamic dependencies in user-system interactions. Consequently, these models often converge to a short-term local optimum, potentially sacrificing overall long-term systemic benefits. In recent years, Reinforcement Learning (RL)~\cite{ppo,a2c}, with its capability for sequential decision-making and maximizing cumulative reward, has emerged as a promising paradigm for addressing the long-term optimization challenge in RS. By explicitly modeling user–system interactions as a sequential process, RL enables long-horizon optimization beyond short-term engagement, offering a principled solution where SL falls short.

Nevertheless, as shown in Figure~\ref{fig:firstgraph}, applying deep RL policies to industrial-scale recommendation environments encounters a critical technical bottleneck: a highly dynamic and extremely large action space. Online RS are required to select optimal actions from a candidate item pool that scales into the millions or even billions and is constantly expanding. Within the RL framework, each item corresponds to a unique discrete action, making direct policy optimization infeasible due to the enormous cardinality and dynamic nature of the action set. This problem is the primary barrier preventing the deployment of mainstream RL algorithms in large-scale RS. Prior studies, such as continuous "hyper-actions" \cite{hac,ddpgra}, offer a workaround; however, the mapping from continuous policy outputs to discrete items often lacks differentiability and consistency, leading to suboptimal exploration and unstable training.

To effectively address the challenges of dynamic, high-dimensional action spaces, we leverage Semantic IDs (SIDs) \cite{tiger} to represent items in a tokenized, hierarchical semantic space, enabling structured action encoding. Building upon this, we propose the HSRL framework, the first Reinforcement Learning approach to utilize SIDs for establishing the novel Semantic Action Space (SAS) paradigm. Specifically, SAS deterministically projects the dynamic, high-dimensional item-level action space onto a fixed-size, low-dimensional hierarchical discrete space. Under this framework, the native RL policy network is exclusively tasked with exploration and optimization within this fixed-dimension semantic space. This design achieves an effective decoupling of the policy decision space from the item execution space, ensuring that the policy output dimension is no longer constrained by the item catalog size. This effectively resolves the action space dynamism problem, paving a new, scalable pathway for industrial RL recommendation systems.

\begin{figure}
    \centering
    \includegraphics[width=1\linewidth]{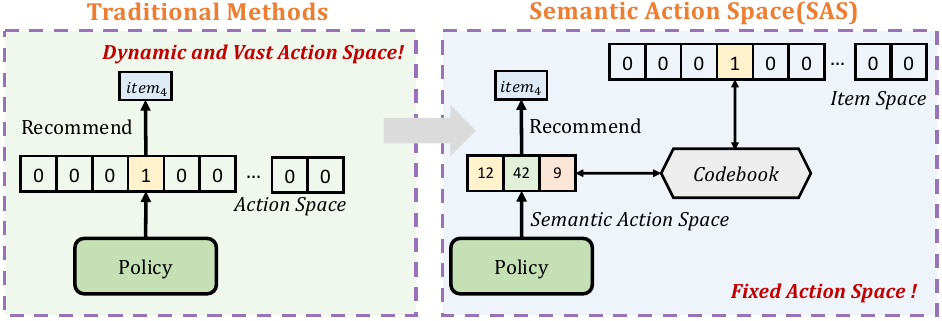}
    \caption{Action Space in Recommendation.}
    \label{fig:firstgraph}
\end{figure}

While the Semantic Action Space offers substantial scalability advantages, adopting structured SIDs as the RL action introduces two new major challenges. First, the SID generation process is a multi-step sequential autoregressive process with an inherent hierarchical dependency: the selection of higher-level tokens dictates the search scope for lower-level tokens, necessitating that the policy precisely captures and models this "coarse-to-fine" dependency, so that decisions for tokens at different levels cannot be treated equally. Second, user interactions in RS typically yield only a sparse, sequence-level reward signal after the entire SID sequence has been generated and executed. This signal setting significantly exacerbates the credit assignment problem. The diluted reward signal across the long decision chain results in unstable policy gradient estimation, severely hindering optimization efficiency and stability.

To effectively address these challenges, we introduce an innovative hierarchical Actor-Critic architecture featuring two core modules. First, the Hierarchical Policy Network (HPN) strictly aligns with the SID's hierarchical structure, generating tokens autoregressively. We introduce Hierarchical Residual State Modeling, ensuring that each layer's policy input recursively fuses the global context with the residual state information remaining after the preceding token's determination, thus guiding the subsequent fine-grained selection. This progressively refined decision mechanism enhances the policy's structured generalization within the semantic space, improving generalization across semantic paths. Second, the Multi-level Critic (MLC) provides independent, adaptive value estimation for every token decision within the SID generation sequence. This approach enables value decomposition and hierarchical credit assignment based solely on the single sequence-level reward. By efficiently breaking down the global reward into fine-grained local learning signals, the MLC alleviates instability in policy learning and ensures stable convergence across the long decision chains.

Our main contributions are summarized as follows:

\begin{enumerate}

\item \textbf{Paradigm Innovation:} We propose the HSRL framework, the first Reinforcement Learning approach to utilize SIDs for establishing the novel Semantic Action Space paradigm, resolving the action space explosion and dynamism challenge in RL-based recommender systems.

\item \textbf{Architectural Design:} We propose a novel architecture, including the Hierarchical Policy Network and Multi-level Critic, which jointly address the challenges of modeling token dependencies and credit assignment.
\item \textbf{Comprehensive Validation:} HSRL outperforms state-of-the-art baselines across public benchmarks and a large-scale production dataset from a leading Chinese short-video advertising platform. In a seven-day A/B testing, it achieves an 18.421\% ADVV lift and a 1.251\% increase in Revenue, demonstrating its strong effectiveness.
\end{enumerate}

\section{Related Work}

\textbf{Session-based recommendation.}
Session-based recommendation (SBR) is a specialized paradigm within the broader context of sequential recommendation (SR) \cite{Wang2021SRSurvey}. SR predicts the next item from a user’s interaction history, whereas SBR focuses on sequences with explicit beginning and termination boundaries, namely self-contained user sessions \cite{Liu2018STAMP}. Departing from myopic next-step accuracy, our goal is to maximize the future reward over the remainder of the session. Methodologically, SBR inherits and adapts the SR toolkit from collaborative filtering \cite{Koren2009MF,Rendle2010FM} to early deep sequence models such as GRU4Rec \cite{Hidasi2016GRU4Rec} and NARM \cite{Li2017NARM}, and further to Transformer-based architectures including SASRec \cite{Kang2018SASRec} and BERT4Rec \cite{Sun2019BERT4Rec}. A central challenge in SBR is constructing robust user-state representations under data sparsity and shifting intentions \cite{Chen2019GAUM,zhao2017deep}. To enrich structural signals, graph-based methods such as SR-GNN incorporate item–item relations \cite{Wu2019SRGNN}, and recent foundation-model approaches such as P5~\cite{Gururangan2022P5} and UniRec~\cite{Hou2023UniRec} cast recommendation as text generation. Although these methods may encode long-range interaction histories effectively, they do not directly optimize long-term user rewards.

\noindent\textbf{Reinforcement learning in Recommendation.} The application of Reinforcement Learning (RL) fundamentally transforms Recommender Systems (RS) by framing the sequential recommendation procedure as a Markov Decision Process (MDP) \cite{rl1,rl2,rl3,rl4,rl5,rl6,rl7,rl8,rl9,pei2019value, taghipour2007usage}. This formulation enables RL-based RS to learn an optimal recommendation policy through continuous, dynamic interaction with the user, thereby maximizing the long-term cumulative value or user experience \cite{afsar2021survey, shani2005mdp, sutton2018rl}. This paradigm offers a crucial advantage: the capacity to dynamically adapt to evolving user preferences, moving beyond static optimization metrics. Despite substantial theoretical and empirical advancements, practical RL-based RS still encounters several critical challenges. Foremost among these is the combinatorial explosion of state and action spaces \cite{dulac2015deep, ie2019slateq, liu2020state}, which severely impedes the convergence and scalability of conventional RL algorithms. Prior work, such as HAC \cite{hac}, attempts to mitigate this complexity by mapping the high-dimensional action space into a more tractable, lower-dimensional latent space. However, this dimensional reduction inherently introduces an information loss gap during the mapping process, a deficiency that may compromise recommendation performance. This challenge motivates our work to develop a more robust and information-preserving action representation learning mechanism.

\section{Methodology}

\begin{figure*}[t]
    \centering
    \includegraphics[width=\linewidth]{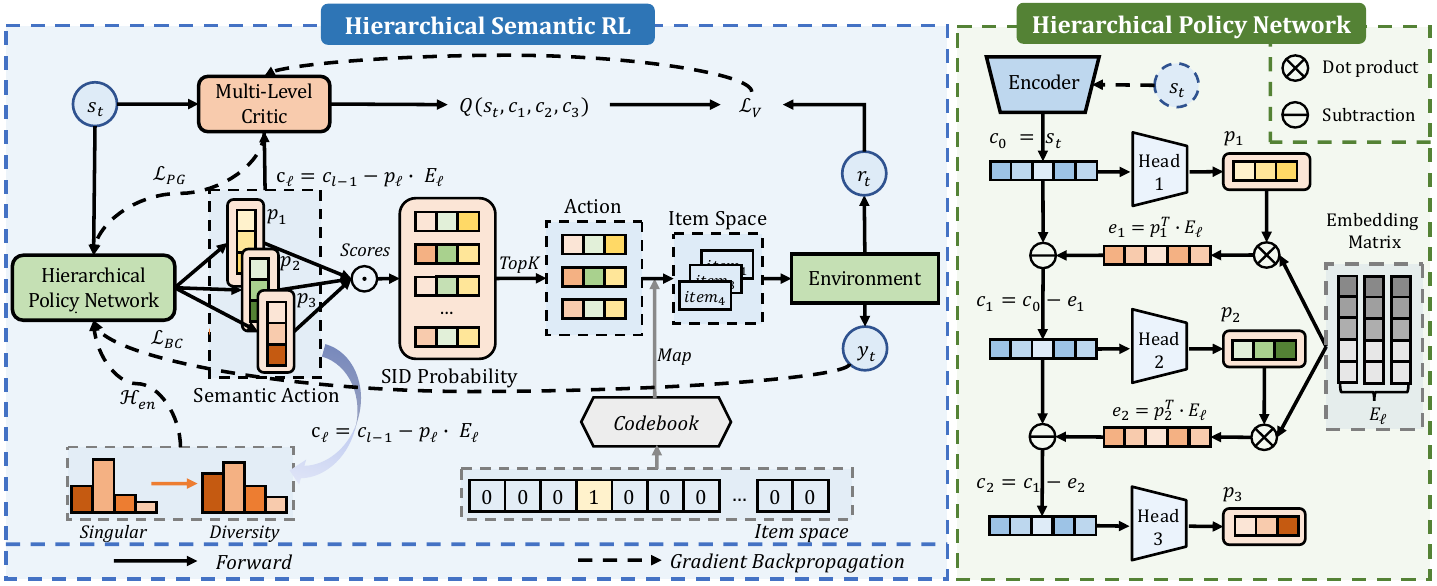}
    \caption{Overview of \textbf{HSRL}. A low-dimensional \textbf{Semantic Action Space (SAS)} maps items to fixed-length semantic IDs; a coarse-to-fine \textbf{Hierarchical Policy Network (HPN)} generates SID tokens autoregressively with residual context refinement, progressively narrowing the semantic subspace; a \textbf{Multi-Level Critic (MLC)} provides level-aware value estimation for structure-aware credit assignment; and a \textbf{joint actor–critic optimization} supports efficient \textbf{SID-based serving}.}
    \label{fig:maingraph}
\end{figure*}

This section presents the HSRL methodology. As illustrated in Figure~\ref{fig:maingraph}, our framework begins by constructing a fixed-dimensional Semantic Action Space (SAS) using SIDs, which decouples policy learning from the dynamic and ever-expanding item catalog and enables optimization in a fixed, low-dimensional space. To align with the intrinsic hierarchical structure of SIDs, the Hierarchical Policy Network (HPN) generates semantic actions autoregressively, refining decisions level-by-level through residual state modeling in a coarse-to-fine manner. To address the challenge of sparse, sequence-level rewards, the Multi-Level Critic (MLC) decomposes the global return into per-level value estimates, enabling structured credit assignment and enhancing training stability. Finally, a unified Actor–Critic optimization objective jointly trains the HPN and MLC, ensuring efficient and stable policy learning within the hierarchical semantic action space.

\subsection{Problem Formulation}

In this work, we cast the recommendation task as a reinforcement learning (RL) problem to optimize long-term user engagement rather than myopic single-click events. We model the interaction between the user and system as a Markov Decision Process (MDP):
\begin{equation}
\mathcal{M} = (\mathcal{S}, \mathcal{A}_{\mathrm{SAS}}, \mathcal{A}, \mathcal{P}, \mathcal{R}, \gamma),
\end{equation}

\begin{itemize}[leftmargin=*]

\item \textbf{State ($\mathcal{S}$).}
At step $t$, the state $s_t\in\mathcal{S}$ summarizes the user's context (e.g., profile and recent interactions).

\item \textbf{Semantic Action Space ($\mathcal{A}_{\mathrm{SAS}}$).}
The policy acts in a fixed-size hierarchical space induced by SIDs. A semantic action is a slate of $k$ length-$L$ token sequences:
\begin{equation}
\mathbf{Z}_t=[\mathbf{z}_{t,1},\ldots,\mathbf{z}_{t,k}], \qquad 
\mathbf{z}_{t,j}=[z_{t,j,1},\ldots,z_{t,j,L}],
\end{equation}
where $z_{t,j,\ell}\in\mathcal{V}_\ell$ for $\ell=1,\ldots,L$ and $j=1,\ldots,k$.

\item \textbf{Effect Action Space ($\mathcal{A}$).}
The environment executes an ordered list of $k$ items from the catalog $\mathcal{I}$:
\begin{equation}
\mathcal{A} = \mathcal{I}^{k}.
\end{equation}

\item \textbf{Policy and action realization ($\pi_\theta$).}
Given $s_t$, the policy samples a semantic action $\mathbf{Z}_t \sim \pi_\theta(\cdot \mid s_t)$, which is mapped to the effect action via a static lookup table:
\begin{equation}
\begin{aligned}
a_t &= \mathrm{Codebook}(\mathbf{Z}_t) \\
    &= [\,\mathrm{Codebook}(\mathbf{z}_{t,1}),\ldots,\mathrm{Codebook}(\mathbf{z}_{t,k})\,] \in \mathcal{A}.
\end{aligned}
\end{equation}

\item \textbf{Transition and Reward ($\mathcal{P},\mathcal{R}$).}
After executing $a_t$, the user environment returns feedback and the state evolves as:
\begin{equation}
\mathcal{P}(s_{t+1}\mid s_t,a_t), \qquad r_t=\mathcal{R}(s_t,a_t,s_{t+1}).
\end{equation}

\item \textbf{Objective ($\mathcal{J}$).}
For a trajectory $\tau=(s_0,a_0,\ldots,s_{T-1},a_{T-1})$,
\begin{equation}
\pi^\star \in \arg\max_{\pi_\theta}\ \mathbb{E}_{\tau \sim \pi_\theta,\mathcal{P}}\Big[\sum_{t=0}^{T-1} \gamma^t\, \mathcal{R}(s_t,a_t,s_{t+1})\Big].
\end{equation}
Here $\gamma\in(0,1)$ is the \emph{discount factor} that trades off immediate vs.\ future rewards. An episode terminates when the session ends or a maximum horizon $T_{\max}$ is reached.

\end{itemize}

\subsection{Semantic Action Space (SAS)}

\subsubsection{Reformulating Actions via Semantic IDs}

In industrial recommender systems, the item catalog is not only massive but also highly dynamic. New items are continuously introduced, existing ones disappear, and topical clusters evolve over time. Under the conventional item-as-action formulation, the action set $\mathcal{A}$ changes accordingly as the catalog updates. This non-stationarity forces the policy head to adapt to an ever-shifting target, causing previously learned logits to misalign with the current set of executable actions. As a result, both training stability and generalization suffer, particularly for newly added or long-tail items.

We address this by reformulating decisions in a Semantic Action Space (SAS) induced by SIDs. Each item is represented by a length-$L$ token sequence:
\begin{equation}
\mathbf{z}=[z_1,\ldots,z_L], \qquad z_\ell \in \mathcal{V}_\ell,\ \ell=1,\ldots,L,
\end{equation}
where $\{\mathcal{V}_\ell\}$ are \emph{fixed}, level-specific vocabularies. The policy acts in this stationary semantic space; the environment later realizes the effect action by deterministically decoding the SID to an item or an ordered slate. Crucially, as the catalog grows or churns, new items are indexed into the existing semantic coordinates rather than expanding or reshaping the policy output space. Thus the learned policy remains compatible with the evolving catalog without reparameterizing the action head.

Beyond stationarity, SAS provides structure that promotes robust generalization. The hierarchy induces a meaningful geometry in which semantically related items share long token prefixes. Gradients and value estimates can therefore propagate along shared semantic components, enabling the policy to transfer behavior from head to tail and to adapt to new items as soon as they are assigned SIDs. The same structure naturally supports \emph{coarse-to-fine} exploration: higher levels commit to broad intent, while lower levels refine towards specific attributes, yielding stable behavior under feedback. In short, SAS decouples policy learning from catalog dynamics while endowing the action space with interpretable, hierarchical semantics that improve stability and long-tail performance.

\subsubsection{Semantic ID Construction}
To instantiate the stationary Semantic Action Space introduced above, we build Semantic IDs offline and keep the resulting codebooks fixed during RL. Starting from item representations, we learn $L$ level-wise codebooks using residual-quantization $k$-means (RQ-$k$-means; see Appendix~\ref{app:rqkmeans}). For each level $\ell\in\{1,\ldots,L\}$, this yields a fixed vocabulary $\mathcal{V}_\ell$ with size $T_\ell=|\mathcal{V}_\ell|$. Each catalog item $i\in\mathcal{I}$ is then assigned one token per level by nearest-centroid lookup, producing its SID:
\begin{equation}
\mathbf{z}(i) = [\,z_{1}(i),\,z_{2}(i),\,\ldots,\,z_{L}(i)\,],\qquad z_{\ell}(i)\in\mathcal{V}_\ell.
\end{equation}
At interaction time $t$, the policy outputs a semantic action $\mathbf{z}_t=[z_{t,1},\ldots,z_{t,L}]$ in the \emph{same} fixed space, and the environment realizes the effect action via deterministic decoding:
\begin{equation}
a_t=\mathrm{Codebook}(\mathbf{z}_t)\in\mathcal{A}.
\end{equation}

As the catalog evolves, newly introduced items can be directly assigned SIDs using the fixed codebooks, while removed items only require deleting their SID-to-item entries. Since the level-wise vocabularies ${\mathcal{V}_\ell}$ and codebooks remain unchanged throughout RL, catalog churn does not alter the semantic decision space or the policy’s output dimensionality, enabling immediate compatibility with new items and stable execution for existing items.

\subsection{Hierarchical Policy Network (HPN)}

The Hierarchical Policy Network (HPN) is explicitly designed to align the policy’s decision process with the intrinsic generation structure of SIDs. This alignment is critical for correctly modeling the sequential and residual dependencies inherent in the Semantic Action Space, which is the primary architectural challenge in deploying SIDs within an RL framework. By strictly adhering to the SID's structure, the HPN operates under a coarse-to-fine reasoning mechanism, ensuring that the selection of higher-level tokens dictates the search scope for lower-level tokens, thus preventing decisions for different levels from being treated equally.

\subsubsection{From State to Coarse-to-Fine Plan}

We explicitly model the policy as an autoregressive, multi-step sequential process that mirrors the construction of an item's SID, $\mathbf{z}=[z_1, \dots, z_L]$. Given the current user state $s$, we first encode the interaction history $s_{\mathrm{hist}}$ into the initial global context vector $c_0 \in \mathbb{R}^d$:
\begin{equation}
c_0=\mathrm{TransformerEncoder}(s,s_{\mathrm{hist}}).
\end{equation}
The HPN then generates tokens sequentially over $L$ semantic levels. At each level $\ell=1,\ldots,L$, a dedicated policy head predicts the token distribution over the vocabulary $\mathcal{V}_\ell$ conditioned on the residual context from the previous level:
\begin{equation}
p\!\left(z_{\ell}\mid c_{\ell-1}\right)
=\mathrm{softmax}\!\left(W_\ell\,c_{\ell-1}\right),
\end{equation}
where $W_\ell$ is the level-$\ell$ projection matrix. This staged factorization is not merely for complexity reduction; it enforces the necessary hierarchical constraint, forcing the policy to refine its commitment at each step based on the evolving semantic context.

\subsubsection{Hierarchical Residual State Modeling (HRSM)}

To ensure this sequential process is truly a progressive refinement rather than a series of independent decisions, we introduce Hierarchical Residual State Modeling (HRSM). After predicting the token distribution at level $\ell$, the context $c_{\ell-1}$ must be updated to explicitly filter out the semantics already committed, forcing subsequent layers to focus only on the residual state information remaining for a fine-grained decision. This mechanism directly addresses and models the "coarse-to-fine" dependency. We calculate the expected semantic embedding $e_\ell$ using a learnable level-wise embedding matrix $E_\ell = [\mathbf{e}_{\ell,1}, \dots, \mathbf{e}_{\ell,T_\ell}]^\top \in \mathbb{R}^{T_\ell \times d}$, where the $z$-th row $\mathbf{e}_{\ell,z}$ corresponds to the semantic token $z \in \mathcal{V}_\ell$ defined by the SID generation process:

\begin{equation}
e_\ell \;=\; \big(p(z_{\ell}\mid c_{\ell-1})\big)^\top E_\ell .
\end{equation}
The next-level context $c_{\ell}$ is then updated by subtracting the expected commitment and normalizing, a process that achieves a crucial context refinement:
\begin{equation}
c_\ell \;=\; \mathrm{LayerNorm}\!\big(c_{\ell-1}-e_\ell\big).
\end{equation}
This recursive context update ensures that the trajectory $\mathcal{C}(s)=\{c_0,c_1,\ldots,c_L\}$ records a coherent and progressively refined reasoning path. By employing this residual strategy, we effectively stabilize training and significantly reduce the representation-decision mismatch that plagues conventional methods, as the policy’s input at any stage is precisely the representation of the remaining decision space.

\subsubsection{Item Likelihood under the Structured Policy}

For a candidate item with SID $\mathbf{z}=[z_{1},\ldots,z_{L}]$, its policy likelihood factorizes along the levels of the decision trajectory $\mathcal{C}(s)$:
\begin{equation}
\pi_\theta(\mathbf{z}\mid s)\;=\;\prod_{\ell=1}^{L} p\!\big(z_{\ell}\mid c_{\ell-1}\big).
\end{equation}
 Therefore, the item's probability is an explicit function of its semantic composition, aligning policy output directly with the structured action space. During execution, the sampled semantic action $\mathbf{z}=[z_{1},\ldots,z_{L}]$ is deterministically decoded to the final effect action $a=\mathrm{Codebook}(\mathbf{z})$, as established by our MDP.

\subsection{Multi-Level Critic (MLC)}

The semantic action space, while powerful, introduces a key challenge: the single, sequence-level reward signal significantly exacerbates the problems of sparse rewards and the credit assignment problem. A positive or negative reward provides no information about which specific part of the semantic ID sequence contributed to the final outcome. A traditional Critic network that outputs a single scalar value $V(s)$ cannot disambiguate this, leading to unstable training in policy gradient estimation. To address this, we design a novel Multi-Level Critic (MLC) that provides a more granular valuation of the hierarchical decision-making process. The MLC performs value decomposition across the entire action generation sequence, enabling precise, hierarchical credit assignment.

\subsubsection{Per-level Values for Value Decomposition.}
Our Critic network outputs a distinct value for the state at each step of the hierarchical action generation. Using the context trajectory $\mathcal{C}(s) = \{c_0, c_1, \dots, c_L\}$ recorded by the Actor, the Critic assigns an expected future reward to each semantic context $c_l$:
\begin{equation}
V_\phi(s,l) = f_\phi(c_l), \quad l=0,\dots,L,
\end{equation}
yielding the vector of per-level values:
\begin{equation}
\mathbf{V}_\phi(s) = \big(V_\phi(s,0), V_\phi(s,1), \dots, V_\phi(s,L)\big).
\end{equation}
Each element $V_\phi(s,l)$ represents the expected cumulative reward assuming optimal future behavior starting from the decision point after $l$ tokens have been determined.

\subsubsection{Adaptive Value Aggregation via Learnable Weights.}
The hierarchical generation of semantic actions follows a coarse-to-fine paradigm: higher-level tokens capture broad user intent, while lower-level tokens refine toward specific attributes. Crucially, under our Hierarchical Residual State Modeling, the context vector $c_\ell$ at level $\ell$ is obtained by subtracting the committed semantics from the previous context. As a result, deeper levels operate on increasingly specialized residual states, whereas earlier levels may retain richer global information about long-term user utility. This asymmetry implies that value estimates from different levels vary in reliability and scope. To leverage this structure, we introduce learnable importance weights $\{w_l\}$ that adaptively fuse per-level value estimates into a single scalar for TD learning:
\begin{equation}
\tilde{w}_l = \frac{\exp(w_l)}{\sum_{j=0}^{L} \exp(w_j)}, \qquad 
\hat V_\phi(s) = \sum_{l=0}^{L} \tilde{w}_l V_\phi(s,l).
\end{equation}
The weights are optimized end-to-end with the critic, allowing the model to automatically prioritize levels that are most predictive of the final sequence-level reward. This adaptive aggregation enables more accurate value estimation and robust hierarchical credit assignment across long decision chains.

\subsection{Hierarchical SID Actor-Critic Optimization}

To jointly optimize our Hierarchical Policy Network (HPN) and Multi-Level Critic (MLC), we adopt a unified Actor-Critic framework, specifically tailored for our structured action space and multi-level value functions. This unified approach ensures both networks learn synergistically, with the Critic's refined value estimates guiding the Actor toward more effective policies and stable convergence.

\subsubsection{TD Target and Advantage Function}

The foundation of our learning signal is the temporal-difference (TD) target, which estimates the action-value function $\hat{Q}(s_t,\mathbf{a}_t)$ by incorporating the immediate reward with the discounted value of the next state, where the dependence on $\mathbf{a}_t$ is captured by the observed $r_t$ and transition to $s_{t+1}$:
\begin{equation}
\hat{Q}(s_t, \mathbf{a}_t) = r_t + \gamma  \hat{V}_{\phi'}(s_{t+1}),
\end{equation}
where $r_t$ is the observed reward, $\gamma$ is the discount factor and $\hat V_{\phi'}(s_{t+1})$ is the weighted state value of the next state predicted by the target Multi-Level Critic.

The advantage function ${A}_t$ measures the improvement (or deficiency) of the chosen action relative to the expected outcome of the current state. It is the core signal used to update the policy:
\begin{equation}
{A}_t = \hat{Q}(s_t, \mathbf{a}_t) - \hat{V}_\phi(s_t).
\end{equation}
For practical gradient stability, we clip the advantage signal to $\mathrm{clip}({A}_t, -1, 1)$ before using it in the final policy update, preventing large advantage values from destabilizing training.

\subsubsection{Critic Objectives}


The Critic is trained to minimize the mean squared error (MSE) between its predicted weighted value $\hat V_\phi(s_t)$ and the calculated TD target $\hat{Q}(s_t,\mathbf{a}_t)$. We define the overall loss as the expected squared error over the data distribution $\mathcal{D}$:
\begin{equation}
\mathcal{L}_V(\phi) = \mathbb{E}_{s_t, \mathbf{a}_t \sim \mathcal{D}} \left[ \big(\hat{V}_\phi(s_t) - \hat{Q}(s_t, \mathbf{a}_t)\big)^2 \right]
\end{equation}
This ensures that the MLC produces accurate value estimates, serving as a stable reference for policy optimization.

\subsubsection{Actor Objectives}

The Actor learns a hierarchical SID generation policy by maximizing the expected advantage, while maintaining sufficient exploration and leveraging logged user feedback for stable optimization. For consistency with the Critic training, we compute all actor-side objectives on logged transitions $(s_t,\mathbf a_t,r_t,s_{t+1},\mathbf y_t)\sim\mathcal D$, where $\mathbf a_t=\{a_{t,1},\ldots,a_{t,k}\}$ is the displayed slate and $\mathbf y_t=\{y_{t,1},\ldots,y_{t,k}\}$ is the corresponding binary feedback.

\textbf{(1) Policy Gradient Loss.}
We adopt the advantage actor-critic objective and apply a slate-level advantage to the entire SID token sequence:
\begin{equation}
\ell_{\text{PG}}(s_t,\mathbf{a}_t)
= -\,\tilde{A}_t \,\log \pi_\theta\!\big(\mathbf{Z}(\mathbf{a}_t) \mid s_t\big),
\end{equation}
where $\tilde{A}_t=\mathrm{clip}(A_t,-1,1)$ stabilizes gradients, and
$\mathbf{Z}(\mathbf a_t)=\{\mathbf z_{a_{t,j}}\}_{j=1}^{k}$ denotes the SID token sequences of the slate items.

The log-likelihood factorizes across items and levels as:
\begin{equation}
\log \pi_\theta\!\big(\mathbf{Z}(\mathbf a_t)\mid s_t\big)
= \frac{1}{k}\sum_{j=1}^{k}\sum_{\ell=1}^{L}
\log p_\theta\!\big(z_{\ell}(a_{t,j}) \mid c_{\ell-1}^{(t,j)},\, s_t\big),
\end{equation}
where $c_{\ell-1}^{(t,j)}$ denotes the residual state at level $\ell\!-\!1$ for item $a_{t,j}$.

\textbf{(2) Entropy Regularization.}
To avoid premature convergence and encourage diverse semantic token exploration, we regularize the conditional token distributions by the average entropy:
\begin{equation}
\scalebox{0.85}{$
\begin{aligned}
\mathcal{H}_{\text{en}}(s_t,\mathbf a_t)
&= \frac{1}{k}\sum_{j=1}^{k}\sum_{\ell=1}^{L}
\Bigg(
-\sum_{z \in \mathcal{V}_\ell}
p_\theta\!\big(z \mid c_{\ell-1}^{(t,j)}, s_t\big)\,
\log p_\theta\!\big(z \mid c_{\ell-1}^{(t,j)}, s_t\big)
\Bigg).
\end{aligned}
$}
\end{equation}

\textbf{(3) Behavioral Cloning Loss.}
To ground policy learning in actual user preferences, we introduce a behavioral cloning term that encourages the policy to imitate items with positive feedback:
\begin{equation}
\ell_{\text{BC}}(s_t,\mathbf a_t) = -\frac{1}{\sum_{j=1}^{k} y_{t,j} + \epsilon} \sum_{j=1}^{k} y_{t,j} \log \pi_\theta\!\big(\mathbf z_{a_{t,j}} \mid s_t\big),
\end{equation}
where $y_{t,j} \in \{0,1\}$ denotes the binary feedback for item $a_{t,j}$ within the logged slate $\mathbf a_t$.

\textbf{Overall Actor Objective.} We denote $\mathbb{E}_{\mathcal D}[\cdot]$ as the expectation over
$(s_t,\mathbf a_t,r_t,s_{t+1},\mathbf y_t)\sim\mathcal D$.
\begin{equation}
\mathcal{L}_\pi(\theta)
= \mathbb{E}_{\mathcal D}\Big[
\ell_{\text{PG}}(s_t,\mathbf a_t)
- \lambda_{\text{en}}\,\mathcal{H}_{\text{en}}(s_t,\mathbf a_t)
+ \lambda_{\text{BC}}\,\ell_{\text{BC}}(s_t,\mathbf a_t)
\Big].
\end{equation}

\subsubsection{Joint Optimization}
Finally, we jointly optimize the Actor and Critic parameters by minimizing the combined objective:
\begin{equation}
\min_{\theta,\phi} \quad \mathcal{L}_V(\phi) + \mathcal{L}_\pi(\theta).
\end{equation}

\section{Experiments}
\subsection{Experimental Setup}

\subsubsection{Dataset.}

\begin{table}[t]
\centering
\begin{threeparttable}
\caption{Statistics of the datasets.}
\label{tab:dataset}
\footnotesize
\setlength{\tabcolsep}{4pt}
\begin{tabular}{lcccc}
\toprule
\textbf{Dataset} & \textbf{\#Users ($|U|$)} & \textbf{\#Items ($|I|$)} & \textbf{\#Records} & \textbf{\#Slate Size ($k$)} \\
\midrule
RL4RS          & --     & 283    & 0.78M  & 9  \\
MovieLens-1M   & 6.4K   & 3.7K   & 1.00M  & 10 \\
Kuaishou-Ads   & 90M    & 100K   & 450M   & 1  \\
\bottomrule
\end{tabular}
\begin{tablenotes}
\footnotesize

\item For the industrial dataset (\textit{Kuaishou-Ads}), we report \textbf{average daily statistics}, as both users and items exhibit daily dynamics.
\end{tablenotes}
\end{threeparttable}
\end{table}

We adopt two widely used public benchmarks, including \emph{RL4RS}~\cite{rl4rs} and \emph{MovieLens-1M}~\cite{ml1m}, following the same preprocessing pipeline as prior work~\cite{hac}. Both datasets are converted into a unified sequential format, with each timestep recording user interactions in chronological order. The earliest 80\% of logs are used for training, and the latest 20\% for evaluation. In addition to the public corpora, we further evaluate our framework on an industrial-scale dataset constructed from impression logs of Kuaishou’s internal advertising platform. This industrial dataset complements the public benchmarks, providing a rigorous validation of the scalability and robustness of our method under real-world commercial recommendation scenarios. Dataset statistics, including $|U|$, $|I|$, the number of records, and slate size $k$, are summarized in Table~\ref{tab:dataset}.

\subsubsection{Evaluation Metrics.}

On offline datasets, we evaluate per session long horizon performance in the simulator using \textbf{Total Reward} and \textbf{Depth}. Total Reward sums stepwise rewards derived from user click signals across the session, and Depth counts the realized number of interaction steps. An interaction ends when a leave signal is triggered, implemented as a patience variable that decays with consecutive unrewarded steps and terminates the episode once it falls below a threshold; Total Reward and Depth are then recorded. For online A/B testing, we report \textbf{Advertiser Value (ADVV)} and \textbf{Revenue}. ADVV measures the advertiser value, i.e., the amount the platform can reasonably charge the advertiser for the delivered traffic, while Revenue is the platform's realized income. The desired outcome is to increase both metrics, while ensuring ADVV grows faster than Revenue.

\subsubsection{Compared Methods.}
To provide a comprehensive evaluation, we compare our HSRL framework against a diverse set of state-of-the-art baselines from both supervised and reinforcement learning paradigms. All methods share identical feature encoders, candidate pools, data splits, and evaluation protocols to ensure a fair comparison. The compared methods are as follows:

\begin{itemize}[leftmargin=*]
    
\item \textbf{SL}: A supervised learning baseline using the SASRec~\cite{Kang2018SASRec} architecture, optimized to predict the next user interaction based on historical sequences. 

\item \textbf{A2C}: A synchronized variant of the Advantage Actor-Critic~\cite{a3c} that performs policy gradients directly in the discrete item-level (effect-action) space. 

\item \textbf{DDPG}: A deterministic policy gradient algorithm~\cite{ddpg} that operates in a continuous hyper-action space to handle large-scale action selection. 

\item \textbf{TD3}: An advanced actor-critic method~\cite{td3} that employs twin critics and delayed policy updates to mitigate value overestimation and stabilize training. 

\item \textbf{DDPG-RA}: A regularized DDPG variant~\cite{ddpgra} that learns continuous embeddings for discrete actions, allowing the actor to operate in a latent space while mapping back to the item space. 

\item \textbf{KGCR}: A hierarchical RL framework~\cite{complement1} that utilizes decision decomposition and integrates explicit user feedback signals for real-time policy refinement. 

\item \textbf{HG}: A hierarchical architecture~\cite{complement2} that combines reinforcement learning with multi-level expert feedback to enhance exploration efficiency in sparse-reward environments. 

\item \textbf{CHIRP}: A continual RL approach~\cite{complement3} that leverages state abstractions and symbolic representations to learn reusable and composable options. 

\item \textbf{HAC}: A hierarchical method~\cite{hac} that bridges hyper-action inference and effect-action selection through alignment and supervision, designed for large and dynamic candidate pools. 
    
\end{itemize}

\subsection{Overall Performance} Table~\ref{tab:offline_performance} summarizes the offline performance on RL4RS and MovieLens-1M. HSRL consistently achieves state-of-the-art results across both datasets in terms of \emph{Total Reward} and \emph{Depth}, which respectively represent long-term user engagement and interaction persistence. Specifically, on RL4RS, HSRL yields a \emph{Total Reward} of 12.013 and a \emph{Depth} of 12.841, outperforming the strongest baseline HAC by 13.4\% and 10.9\%. On MovieLens-1M, HSRL attains 18.773 in \emph{Total Reward} and 18.839 in \emph{Depth}, surpassing the strongest baseline CHIRP by 6.7\% and 5.0\%. These results indicate that conventional RL approaches operating over flat item-level spaces are fundamentally challenged by large action sets, which results in inefficient exploration. Furthermore, many continuous-action RL baselines relax discrete recommendations into continuous embeddings and project them back to items during execution. This process introduces a significant train-inference discrepancy that leads to sub-optimal long-term behaviors. In contrast, HSRL operates in a Semantic Action Space to reduce decision complexity and narrow the gap between learned actions and executed recommendations, thereby enabling more effective long-term optimization.

\begin{table}[!t]
\centering
\caption{Offline performance comparison.}
\label{tab:offline_performance}
\footnotesize
\setlength{\tabcolsep}{6pt}
\begin{tabular}{lcc|cc}
\toprule
\multirow{2}{*}{Model} & \multicolumn{2}{c|}{RL4RS} & \multicolumn{2}{c}{ML1M} \\
& Total Reward & Depth & Total Reward & Depth \\
\midrule
SL & 6.721 & 8.163 & 16.666 & 16.953 \\
\midrule
A2C    & 7.789 & 9.140 & 9.533  & 10.644 \\
DDPG   & 8.337 & 9.588 & 15.773 & 16.154 \\
TD3    & 8.553 & 9.791 & 17.465 & 17.677 \\
DDPG-RA  & 8.561 & 9.728 & 14.032 & 14.601 \\
HG     & 8.573 & 9.865 & 14.682 & 15.417 \\
KGCR   & 9.470 & 10.758 & 17.423 & 17.517 \\
CHIRP  & 9.841 & 10.934 & \underline{17.591} & \underline{17.951} \\
HAC & \underline{10.595} & \underline{11.582} & 17.587 & 17.787 \\

\midrule
HSRL (Ours) & \textbf{12.013} & \textbf{12.841} & \textbf{18.773} & \textbf{18.839} \\

\bottomrule
\end{tabular}
\end{table}

\subsection{Ablation Study}

\begin{table}[t]
\centering
\caption{Ablation study results.}
\label{tab:ablation}
\footnotesize
\setlength{\tabcolsep}{5pt}
\begin{tabular}{lcc|cc}
\toprule
\multirow{2}{*}{Model Variant} & \multicolumn{2}{c|}{Scores} & \multicolumn{2}{c}{Drop (\%)} \\
& Total Reward & Depth & Reward $\Delta$ & Depth $\Delta$ \\
\midrule
HSRL (Full) & \textbf{12.013} & \textbf{12.841} & -- & -- \\
\midrule
w/o Entropy             & 9.766  & 10.843  & $-18.7$ & $-15.6$ \\
w/o Hierarchical Policy & 10.175 & 11.204  & $-15.3$ & $-12.7$ \\
w/o BC loss             & 11.254 & 12.123  & $-6.3$  & $-5.6$  \\
w/o Multi-Level Critic  & 10.734 & 11.756  & $-10.6$ & $-8.4$  \\
\bottomrule
\end{tabular}
\end{table}

We evaluate the contribution of HSRL's key components on the RL4RS dataset, as summarized in Table~\ref{tab:ablation}. Removing entropy regularization or the hierarchical policy structure leads to the most substantial performance drops, where \emph{Total Reward} decreases by 18.7\% and 15.3\% respectively. This underscores the importance of maintaining exploration pressure and leveraging coarse-to-fine decision making within the Semantic Action Space. Additionally, the degradation caused by disabling the multi-level critic suggests that token-level value estimation is essential for accurate credit assignment. Lastly, the moderate decline observed without the behavioral cloning term indicates its role in stabilizing training dynamics. Overall, these results confirm that the structural design of HSRL is fundamental to its performance gains.

\subsection{Online Deployment}

\begin{figure}
    \centering
    \includegraphics[width=0.8\linewidth]{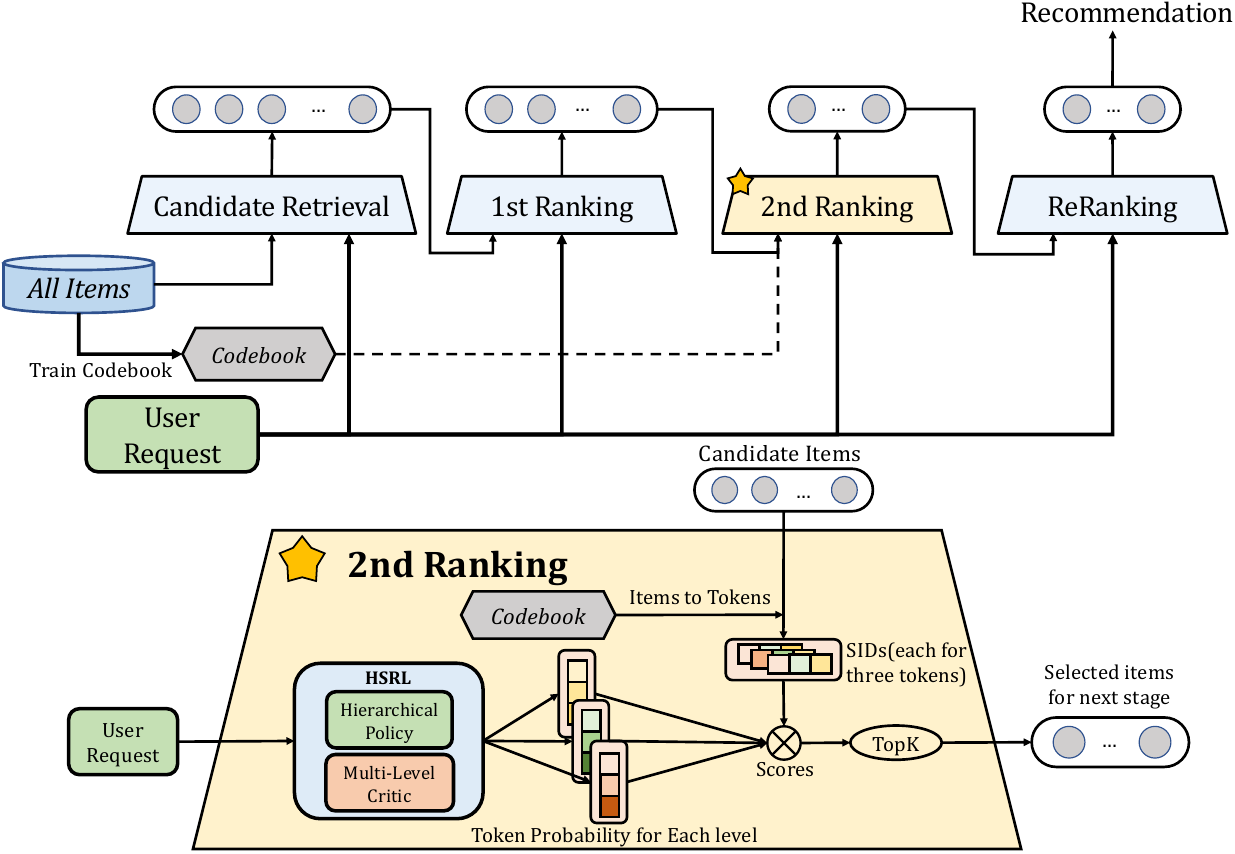}
    \caption{Online Deployment.}
    \label{fig:online}
\end{figure}

\subsubsection{Deployment Pipeline.}

As shown in Figure~\ref{fig:online}, HSRL is integrated into Kuaishou's advertising system and deployed in the second-stage ranking module. Offline, we construct a fixed SID codebook that maps each item in the full catalog to a deterministic 3-token SID. At serving time, the upstream recall module provides a limited candidate set for each request. Given the user request, HSRL performs a single forward pass to generate three level-wise token distributions. Each candidate item is then decoded into its 3-token SID using the precomputed codebook, and its final score is obtained by multiplying the probabilities of its corresponding tokens under the three predicted distributions. Candidates are ranked by this composite score, and the top-ranked items are forwarded to the subsequent stage. This design preserves the semantic structure of the Semantic Action Space and satisfies strict latency requirements in production. Notably, HSRL reuses the exact same backbone and feature interface as the supervised second-stage ranker, so the serving graph is unchanged and the shift is purely from SL to RL, with no added online latency.
\subsubsection{Online Performance.}

\begin{figure}
    \centering
    \includegraphics[width=0.7\linewidth]{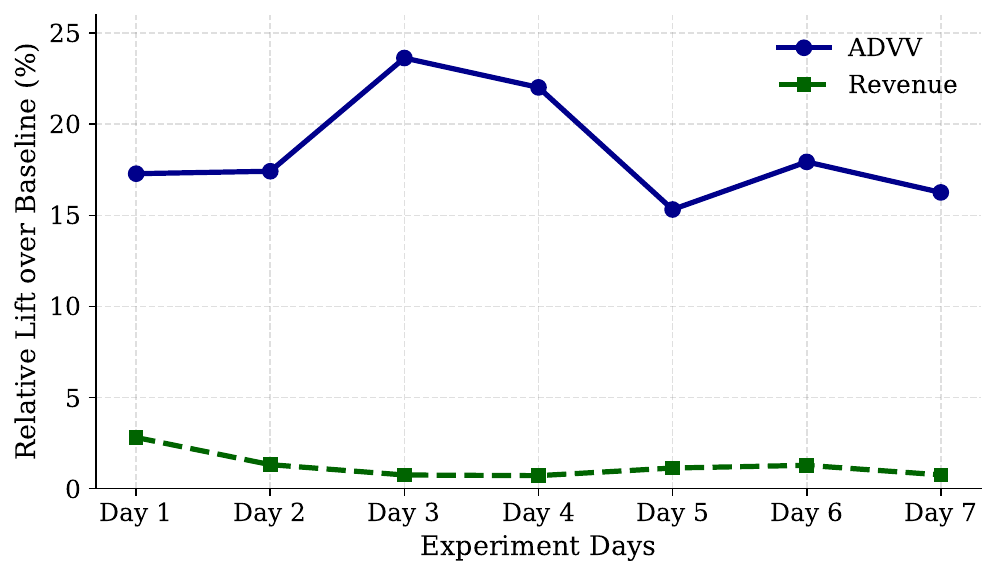}
    \caption{Online Performance.}
    \label{ab_test_results}
\end{figure}

As shown in Figure~\ref{ab_test_results}, the online A/B testing in a live production environment provides stronger evidence than simulator results and further validates the effectiveness of our model. We conducted a 7-day online A/B testing to evaluate the real-world business impact of our HSRL framework. Both the control and experimental groups were allocated 15\% of total traffic, ensuring a fair and large-scale comparison. For the control group, we deployed a supervised learning (SL) baseline. HSRL achieved significant gains on business metrics, improving ADVV by \textbf{18.421\%} and a \textbf{1.251\%} increase in Revenue. This suggests that advertiser value improves with a relatively smaller increase in advertiser spend, validating the model’s practical effectiveness. These results show that HSRL offers clear advantages for ad recommendation by optimizing long-term user value rather than short-horizon signals and demonstrate the practical feasibility of reinforcement learning in recommendation.

\subsection{Analysis of Multi-Level Critic Weights}

To validate the design of our Multi-Level Critic (MLC), we analyze the evolution of the learned importance weights $\{w_l\}$ during training. The weights stabilize over time, converging to a consistent distribution that reflects the model’s credit assignment strategy. Specifically, we observe the ordering $w_0 > w_1 > w_3 > w_2$, where $w_0$ corresponds to the initial context $c_0$, and $w_1, w_2, w_3$ correspond to the contexts after generating the first, second, and third tokens of the 3-token SID, respectively. The dominance of $w_0$ is expected, as it encodes the global user state and defines the initial feasible semantic subspace. The first token ($w_1$) refines this subspace by committing to a broad category, yielding the largest reduction in action space uncertainty and thus receiving high value attribution. In contrast, the second token ($w_2$) exhibits the lowest weight, suggesting limited incremental information gain. This aligns with the \emph{hourglass effect}~\cite{hourglass} in residual-quantized SIDs, where middle layers suffer from codeword concentration and low entropy, reducing their discriminative capacity. The final token ($w_3$), however, recovers discriminative power by specifying fine-grained attributes, resulting in a higher weight than $w_2$. Together, these findings confirm that the MLC adaptively allocates credit in a manner consistent with the hierarchical information structure of SIDs, enabling accurate and interpretable value decomposition. The full training dynamics of the weights are visualized in Appendix~\ref{app:mlc_weights}.

\subsection{Sensitivity analysis}

We conduct a sensitivity analysis on \textsc{ML1M} to assess robustness with respect to three hyperparameters of the semantic actionization: the entropy coefficient $\lambda_{\text{en}}$, the per-level vocabulary size $|\mathcal{V}_l|$, and the number of semantic levels $L$. In each study, we vary a single factor while holding the others fixed.

\subsubsection{Entropy Coefficient($\lambda_{\text{en}}$).}
First, we fix the vocabulary size at $|\mathcal{V}_l|{=}64$ and the number of semantic levels at $L{=}3$, and sweep the entropy coefficient. As shown in Figure~\ref{fig:entropy_size_sensitivity}, performance peaks at $\lambda_{\text{en}}{=}0.1$; setting $\lambda_{\text{en}}{=}0.2$ yields slightly lower scores, no entropy ($0.0$) is worse, and $\lambda_{\text{en}}{=}0.3$ is the lowest. Overall, a small amount of entropy encourages exploration in the low-dimensional, fixed SAS while preserving the HPN’s coarse-to-fine refinement, whereas too much entropy injects noise and blurs level-conditioned signals. In practice, $\lambda_{\text{en}}\approx 0.1$ offers a robust performance.

\subsubsection{Vocabulary Size ($|\mathcal{V}_l|$).}
Second, we fix the entropy coefficient at $\lambda_{\text{en}}{=}0.1$ and the number of levels at $L{=}3$, and vary $|\mathcal{V}_l|$. As shown in Figure~\ref{fig:sen_vocab}, the results exhibit a rise--then--fall pattern: performance improves steadily with larger codebooks and peaks around $|\mathcal{V}_l|\approx 80$, after which further enlarging the vocabulary yields diminishing returns and eventual degradation. Very small vocabularies under-represent semantics (tokens become overly broad), whereas very large vocabularies inflate the per-level action space and raise optimization difficulty without proportional gains. Overall, a sufficiently large but not overly complex codebook (around $80$) balances expressivity and learning efficiency.

\subsubsection{Number of Semantic Levels ($L$).}

Finally, we fix the vocabulary size at $|\mathcal{V}_l|{=}64$ and the entropy coefficient at $\lambda_{\text{en}}{=}0.1$, and vary $L$. As shown in Figure~\ref{fig:sen_size}, a moderate depth ($L{=}4$) achieves the best overall performance, with $L{=}5$ close behind. A shallower hierarchy ($L{=}3$) makes the semantic action space too coarse to capture fine-grained preferences, whereas an overly deep hierarchy increases exploration difficulty, complicates credit assignment, and aggravates lower-level data sparsity, which together reduce training stability and final quality. In practice, setting $L{\approx}4$ provides a strong trade-off between semantic precision and learnability.

\begin{figure}[t]
  \centering
  \begin{subfigure}[t]{0.33\linewidth}
    \centering
    \includegraphics[width=\linewidth]{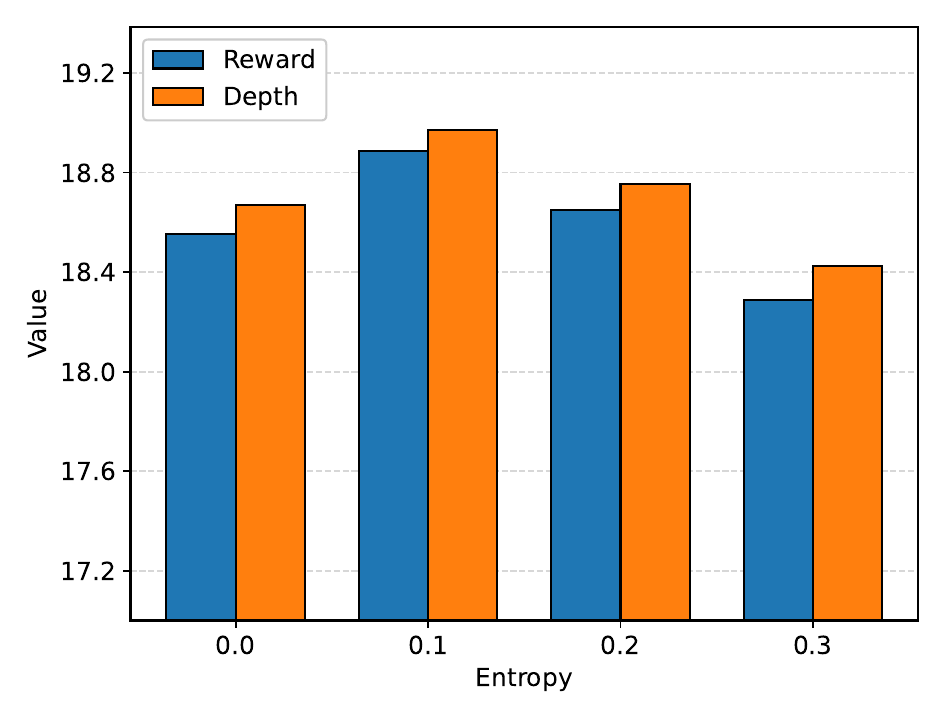}
    \caption{Entropy}
    \label{fig:entropy_size_sensitivity}
  \end{subfigure}\hfill
  \begin{subfigure}[t]{0.33\linewidth}
    \centering
    \includegraphics[width=\linewidth]{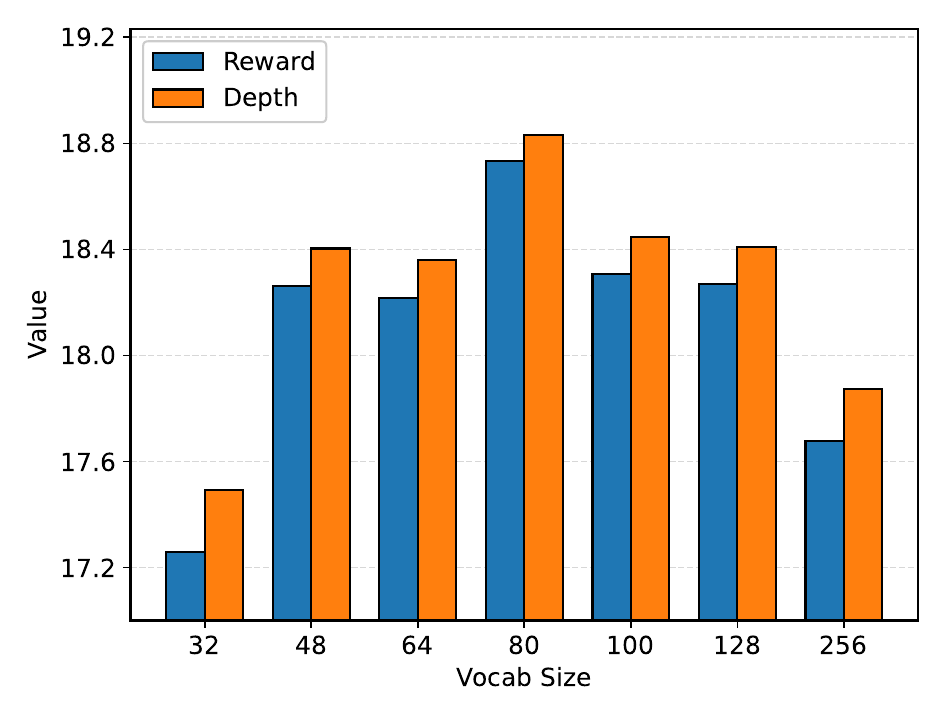}
    \caption{Vocab Size}
    \label{fig:sen_vocab}
  \end{subfigure}\hfill
  \begin{subfigure}[t]{0.33\linewidth}
    \centering
    \includegraphics[width=\linewidth]{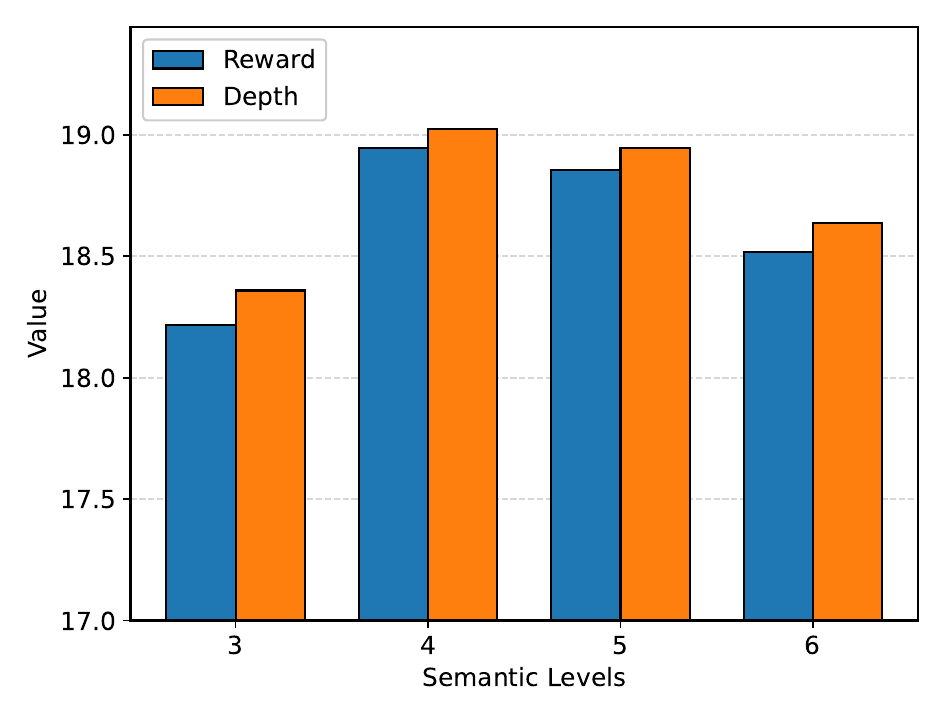}
    \caption{Semantic Levels}
    \label{fig:sen_size}
  \end{subfigure}
  \caption{Sensitivity Analysis.}
  \label{fig:sensitivity}
\end{figure}

\section{Conclusion}
\label{sec:conclusion}
In this paper, we address a core bottleneck in reinforcement learning based recommender systems: the dynamic, high-dimensional action space. We propose \textbf{Hierarchical Semantic Reinforcement Learning (HSRL)}, which introduces a \textbf{Semantic Action Space (SAS)} to decouple policy learning from the ever-changing item catalog by mapping items to fixed-dimensional Semantic IDs (SIDs). To exploit this structure, HSRL comprises two components: a \textbf{Hierarchical Policy Network (HPN)} that generates SIDs in a coarse-to-fine manner with residual state modeling to ensure structured decisions and generalization, and a \textbf{Multi-Level Critic (MLC)} that performs token-level value estimation to enable fine-grained credit assignment under sparse, sequence-level rewards. Experiments on public benchmarks and a large-scale industrial dataset show consistent gains on long-term engagement metrics. In a seven-day online A/B testing on a commercial advertising platform, HSRL achieved a \textbf{18.421\%} improvement in ADVV and a \textbf{1.251\%} increase in Revenue, demonstrating strong real-world effectiveness. Overall, HSRL provides a scalable and stable framework for RL-based recommendation and establishes structured semantic action modeling as a practical paradigm for large-scale systems.

\section{Acknowledgments}

This study was partially funded by the National Natural Science Foundation of China (62502340, 72471165), Beijing-Tianjin-Hebei Natural Science Foundation Cooperation Project (25JJJJC0045), and Tianjin University Independent Innovation Fund-Social Impact Project (2025XSC-0048).

\bibliographystyle{ACM-Reference-Format}
\bibliography{arxiv}

\appendix

\section{Dataset Processing}
\label{app:dataset}

For the ML-1M dataset, we treat movies with user ratings higher than 3 as positive samples (indicating a “like”) and all others as negative samples. Each user’s interaction sequence is segmented chronologically into subsequences of length 10. For each such subsequence, only the positive interactions that occurred prior to it are included in the historical behavior. The final dataset consists of records in the format: (user ID, historical behavior sequence, current item list, label list), which aligns with the RL4RS dataset schema.

\section{Implementation Details.}
The interactive environment follows the same construction: for each dataset we train a user response model $\Psi: \mathcal{S}\times\mathcal{A}\rightarrow \mathbb{R}^k$ that maps a state (built from static user features and dynamic histories) and a recommended slate to click-through probabilities, from which binary feedback $y_t\in\{0,1\}^k$ is sampled. Rewards are defined as the average item-wise signal, bounded in $[{-}0.2,\,1.0]$. Our actor backbone is \textit{SASRec}\cite{Kang2018SASRec}, as in the prior setup. We train one user-response simulator $\Psi$ on the training split (used during policy learning) and a second simulator pretrained on the entire dataset (used only for evaluation). Policies are optimized in the first simulator and assessed in the second. Unless noted, we fix the discount at $\gamma=0.9$ and cap the interaction horizon at $20$ steps; in practice, RL methods stabilize within $\sim 50{,}000$ iterations. Our code is released at \url{https://github.com/MinmaoWang/HSRL}.

\section{RQ-$k$-means for SID Tokenization}
\label{app:rqkmeans}

We use residual-quantization $k$-means (RQ-$k$-means) to build level-wise codebooks and assign SIDs in a coarse-to-fine manner. Let $x_i\in\mathbb{R}^d$ be the embedding of item $i\in\mathcal{I}$. We construct $L$ levels with vocabulary sizes $T_\ell=|\mathcal{V}_\ell|$.

\paragraph{Initialization.}
Set the initial residuals as the item embeddings:
\begin{equation}
R^{(1)} := X = \big[x_1^\top;\ldots; x_{|\mathcal{I}|}^\top\big]\in\mathbb{R}^{|\mathcal{I}|\times d},\quad
\mathbf{r}^{(1)}_i := x_i .
\end{equation}

\paragraph{Level $\ell=1,\ldots,L$: codebook learning and assignment.}
Cluster current residuals to obtain the level-$\ell$ codebook (centroids):
\begin{equation}
C^{(\ell)}=\mathrm{k\text{-}means}\!\big(R^{(\ell)},\,T_\ell\big), \qquad
C^{(\ell)}=\{\mathbf{c}^{(\ell)}_{k}\in\mathbb{R}^{d}\mid k=1,\ldots,T_\ell\}.
\end{equation}
Assign item $i$ to its nearest centroid (the token index \(z_\ell(i)\in\mathcal{V}_\ell=\{1,\ldots,T_\ell\}\)):
\begin{equation}
z_\ell(i)=\arg\min_{k\in\{1,\ldots,T_\ell\}} \big\|\,\mathbf{r}^{(\ell)}_i-\mathbf{c}^{(\ell)}_{k}\,\big\|_2 .
\end{equation}
Update the residual for the next level:
\begin{equation}
\mathbf{r}^{(\ell+1)}_i=\mathbf{r}^{(\ell)}_i-\mathbf{c}^{(\ell)}_{\,z_\ell(i)} .
\end{equation}

\paragraph{Output.}
After $L$ levels, the Semantic ID of item $i$ is
\begin{equation}
\mathbf{z}(i) = [\,z_1(i),\,z_2(i),\,\ldots,\,z_L(i)\,], \qquad z_\ell(i)\in\mathcal{V}_\ell .
\end{equation}

\section{Multi-Level Critic Weight Dynamics}
\label{app:mlc_weights}

Figure~\ref{fig:appendix_mlc_weights} shows the evolution of the learned importance weights $\{w_l\}$ during training on the RL4RS dataset. The weights converge to a stable ordering $w_0 > w_1 > w_3 > w_2$, consistent with the analysis in Section~4.5.

\begin{figure}[h]
    \centering
    \includegraphics[width=0.9\linewidth]{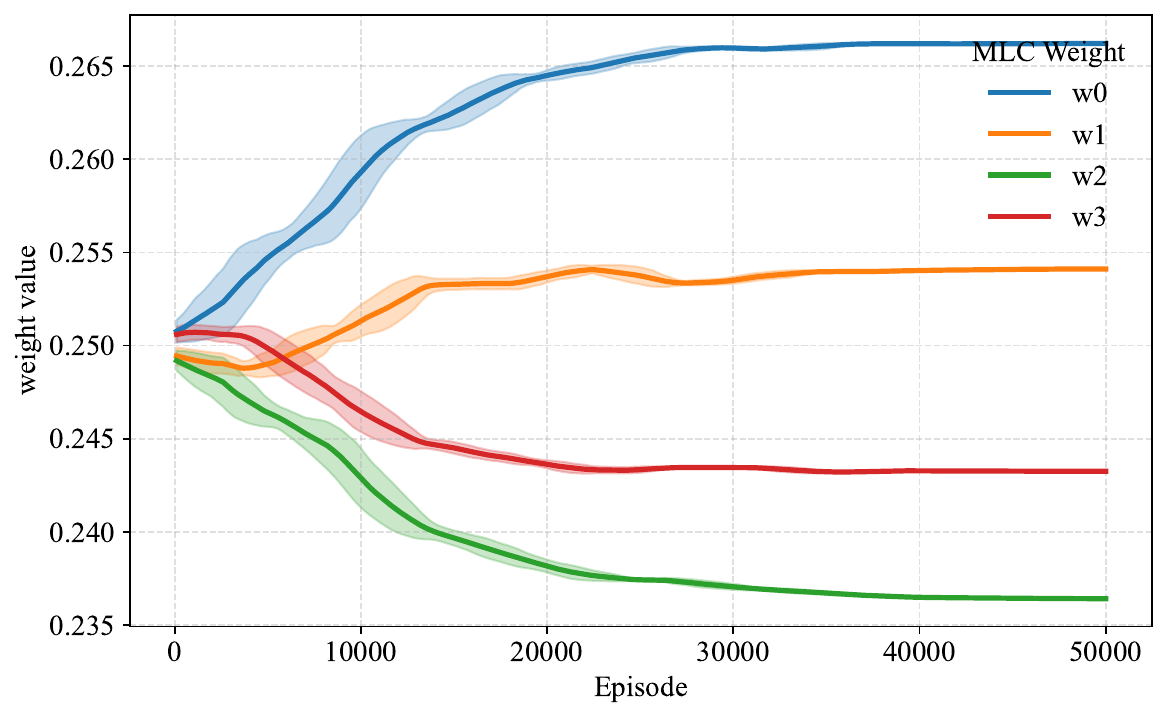}
    \caption{Training dynamics of MLC importance weights.}
    \label{fig:appendix_mlc_weights}
\end{figure}

\section{Deterministic SAS-to-Item Mapping}
Given a policy output $\mathbf{z}$ in the fixed Semantic Action Space, we \emph{directly} decode it via a static codebook $\mathrm{Codebook}(\mathbf{z})$ to the item set
$S(\mathbf{z})=\{\,i \mid \mathbf{z}(i)=\mathbf{z}\,\}$, \textbf{without any similarity computation or NN retrieval}. We minimize SID collisions offline; when rare collisions remain ($|S(\mathbf{z})|>1$), we treat them as equivalent candidates under the same semantic intent: include all if capacity allows, or apply lightweight business priors (e.g., recency/quality/freq-capping) to select or diversify when capacity is constrained. Thus the SAS$\!\to\!$Item mapping is a deterministic lookup rather than a similarity-based approximation.

\section{Theoretical Analysis: Structural Alignment of HPN with the SID Generation Process}
\label{app:theory_consistency}

In this section, we provide a principled justification for the hierarchical design of our policy network (HPN).  
Rather than asserting strict equivalence, we demonstrate that HPN is \textbf{structurally aligned} with the residual quantization process that underlies Semantic ID (SID) generation.  
This alignment provides a theoretical basis for the hierarchical, residual architecture adopted in our policy design.

\subsection{Preliminaries: Hierarchical Structure of SID Generation}

The Semantic ID (SID) of an item is generated through a residual quantization process over $L$ levels.  
Let $x_i \in \mathbb{R}^d$ denote the embedding of item $i$.  
At each level $\ell \in \{1, \dots, L\}$, a fixed codebook  
$\mathcal{C}^{(\ell)} = \{\mathbf{c}^{(\ell)}_1, \dots, \mathbf{c}^{(\ell)}_{T_\ell}\}$  
is learned offline via $k$-means.  
Given the residual vector $\mathbf{r}^{(\ell)}$, the token is selected as:
\begin{equation}
z_\ell^\star = \arg\min_{k \in \{1, \dots, T_\ell\}} 
\|\mathbf{r}^{(\ell)} - \mathbf{c}^{(\ell)}_k\|_2^2.
\label{eq:sid-selection}
\end{equation}

The residual for the next level is updated as: 
\begin{equation}
\mathbf{r}^{(\ell+1)} = \mathbf{r}^{(\ell)} - \mathbf{c}^{(\ell)}_{z_\ell^\star}.
\label{eq:sid-residual}
\end{equation}

This defines a hierarchical, autoregressive mapping with conditional dependencies:
$p(z_1, \dots, z_L) = \prod_{\ell=1}^{L} p(z_\ell \mid z_{<\ell})$.

\subsection{HPN as a Continuous, Differentiable Approximation}

The Hierarchical Policy Network (HPN) models the same autoregressive dependency structure in a continuous and differentiable manner.  
At level $\ell$, HPN predicts a token distribution:
\begin{equation}
p_\theta(z_\ell \mid \mathbf{c}_{\ell-1}) 
= \mathrm{softmax}(W_\ell \mathbf{c}_{\ell-1}),
\label{eq:hpn-policy}
\end{equation}

and computes an expected semantic embedding using a learnable embedding matrix $E_\ell = [\mathbf{e}_{\ell,1}, \dots, \mathbf{e}_{\ell,T_\ell}]^\top$, where the $z$-th row $\mathbf{e}_{\ell,z}$ corresponds to the semantic token $z \in \mathcal{V}_\ell$ defined by the SID generation process:
\begin{equation}
\mathbf{e}_\ell 
= \sum_{z \in \mathcal{V}_\ell} 
p_\theta(z_\ell = z \mid \mathbf{c}_{\ell-1}) \, \mathbf{e}_{\ell,z}.
\label{eq:hpn-expect}
\end{equation}

The context is updated as:
\begin{equation}
\mathbf{c}_\ell = 
\mathrm{LayerNorm}\big(\mathbf{c}_{\ell-1} - \mathbf{e}_\ell\big).
\label{eq:hpn-residual}
\end{equation}

Comparing Eq.~\eqref{eq:sid-residual} and Eq.~\eqref{eq:hpn-residual}, we observe that both processes follow the same residual refinement principle:  
the current state is updated by subtracting a representation of the chosen semantic component.  
In SID, this component is the discrete centroid $\mathbf{c}^{(\ell)}_{z_\ell^\star}$;  
in HPN, it is the continuous expectation $\mathbf{e}_\ell$, which can be viewed as a \textbf{soft relaxation} of the discrete choice.

\subsection{Structural Alignment Argument}

\begin{proposition}[Structural Alignment]
The HPN update rule (Eq.~\ref{eq:hpn-residual}) implements a continuous, differentiable approximation of the SID residual update (Eq.~\ref{eq:sid-residual}), preserving the coarse-to-fine autoregressive dependency structure. Specifically:
\begin{enumerate}[label=(\roman*)]
    \item Both processes are autoregressive: the decision at level $\ell$ depends on the state refined by levels $1$ to $\ell-1$;
    \item Both use a residual mechanism: the next-level state is derived by subtracting the semantic contribution of the current level;
    \item When the policy distribution $p_\theta(z_\ell \mid \mathbf{c}_{\ell-1})$ becomes concentrated around the optimal token $z_\ell^\star$, the expected embedding $\mathbf{e}_\ell$ aligns with the semantic role of $\mathbf{c}^{(\ell)}_{z_\ell^\star}$, and the HPN update approximates the SID update up to normalization.
\end{enumerate}
\end{proposition}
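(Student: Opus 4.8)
The plan is to prove the three claims in sequence, treating (i) and (ii) as direct structural consequences of the two recursive definitions and reserving the real work for the limiting statement in (iii). The overarching strategy is to place the SID residual recursion (Eq.~\ref{eq:sid-residual}) and the HPN context recursion (Eq.~\ref{eq:hpn-residual}) side by side and exhibit both as instances of a common template: the level-$\ell$ state equals a normalization of the previous state minus the semantic commitment at level $\ell$, where the normalization is the identity for SID and $\mathrm{LayerNorm}$ for HPN, and the commitment is a hard centroid $\mathbf{c}^{(\ell)}_{z_\ell^\star}$ for SID versus a soft expectation $\mathbf{e}_\ell$ for HPN. Establishing this template makes (i) and (ii) immediate and isolates precisely the gap that (iii) must close.

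For (i), I would unfold both recursions by induction. Since $\mathbf{r}^{(\ell)}$ is obtained from $\mathbf{r}^{(\ell-1)}$ and $z_{\ell-1}^\star$, and the latter is itself selected from $\mathbf{r}^{(\ell-1)}$, an induction shows $z_\ell^\star$ is a deterministic function of $z_{<\ell}$; symmetrically, $\mathbf{c}_{\ell-1}$ depends on $\mathbf{c}_0$ and the soft commitments $\mathbf{e}_1,\dots,\mathbf{e}_{\ell-1}$, each of which is a function of the earlier distributions, so $p_\theta(z_\ell \mid \mathbf{c}_{\ell-1})$ encodes a genuine dependence on levels $1$ through $\ell-1$. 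This yields the factorization $p(z_1,\dots,z_L) = \prod_{\ell} p(z_\ell \mid z_{<\ell})$ in both cases. Claim (ii) then follows by inspection of the template: both updates have the form \emph{previous state} $-$ \emph{commitment}, differing only in whether the commitment is a discrete centroid or an expectation.

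The crux is (iii), where I would formalize \emph{concentration} as the policy mass collapsing onto the selected token, $p_\theta(z_\ell = z \mid \mathbf{c}_{\ell-1}) \to \delta_{z, z_\ell^\star}$ (obtained, for instance, as the softmax temperature tends to zero or the logit gap diverges). Because $\mathbf{e}_\ell = \sum_{z \in \mathcal{V}_\ell} p_\theta(z_\ell = z \mid \mathbf{c}_{\ell-1})\,\mathbf{e}_{\ell,z}$ is a \emph{finite} convex combination, continuity in the mixing weights gives $\mathbf{e}_\ell \to \mathbf{e}_{\ell, z_\ell^\star}$, so the pre-normalization HPN update $\mathbf{c}_{\ell-1} - \mathbf{e}_\ell$ converges to $\mathbf{c}_{\ell-1} - \mathbf{e}_{\ell, z_\ell^\star}$, which is exactly the SID update form with the learnable embedding occupying the centroid's slot. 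The $\mathrm{LayerNorm}$ then accounts for the ``up to normalization'' qualifier, since it rescales the residual without reversing its direction.

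The main obstacle will be the identification in the final clause of (iii): the learnable embeddings $\{\mathbf{e}_{\ell,z}\}$ are trained freely and are not constrained by construction to equal the offline codebook centroids $\{\mathbf{c}^{(\ell)}_k\}$, so numerical equivalence cannot hold and the honest formulation is \emph{alignment of semantic role} rather than identity. I would therefore state (iii) under an explicit assumption that $\mathbf{e}_{\ell, z_\ell^\star}$ represents the same committed component as $\mathbf{c}^{(\ell)}_{z_\ell^\star}$ (plausible because both are precisely the quantity subtracted to form the level-$\ell$ residual), and frame the conclusion as a structural approximation up to this role-correspondence and the $\mathrm{LayerNorm}$ rescaling—consistent with the proposition asserting alignment rather than strict equivalence.
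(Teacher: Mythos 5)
Your proposal is correct and follows essentially the same route as the paper's proof: points (i) and (ii) by direct inspection of the two recursions, and (iii) by the concentration limit $p_\theta(z_\ell = z_\ell^\star \mid \mathbf{c}_{\ell-1}) \to 1$ forcing the finite convex combination $\mathbf{e}_\ell$ to collapse to $\mathbf{e}_{\ell,z_\ell^\star}$, with the same caveat that the learnable embedding is not numerically equal to the centroid $\mathbf{c}^{(\ell)}_{z_\ell^\star}$ but shares its token index, and with $\mathrm{LayerNorm}$ absorbed into the ``up to normalization'' qualifier. If anything, you are slightly more careful than the paper in flagging the embedding--centroid role-correspondence as an explicit assumption rather than an asserted semantic consistency, which is a fair sharpening of the same argument.
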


\begin{proof}
The first two points follow directly from Eq.~\eqref{eq:sid-residual} and Eq.~\eqref{eq:hpn-residual}.  
For (iii), when $p_\theta(z_\ell = z_\ell^\star \mid \mathbf{c}_{\ell-1}) \to 1$, the expected embedding becomes $\mathbf{e}_\ell \to \mathbf{e}_{\ell,z_\ell^\star}$. Although $\mathbf{e}_{\ell,z_\ell^\star}$ is not necessarily equal to $\mathbf{c}^{(\ell)}_{z_\ell^\star}$, both correspond to the same semantic token $z_\ell^\star$ in the shared token-ID space $\mathcal{V}_\ell$. Thus, the direction of the residual update $\mathbf{c}_{\ell-1} - \mathbf{e}_\ell$ remains semantically consistent with $\mathbf{r}^{(\ell)} - \mathbf{c}^{(\ell)}_{z_\ell^\star}$.  
Substituting into Eq.~\eqref{eq:hpn-residual} yields  
$\mathbf{c}_\ell \approx \mathrm{LayerNorm}(\mathbf{c}_{\ell-1} - \mathbf{e}_{\ell,z_\ell^\star})$.  
While LayerNorm introduces non-linear normalization, it preserves the directional semantics of the residual update and stabilizes magnitude during training.  
Furthermore, under typical training conditions, LayerNorm acts as an approximately non-expansive transformation, helping to bound gradient norms and reinforce stability.
\end{proof}

\subsection{Practical Implications}

This analysis shows that the HPN architecture is not heuristic but a \textbf{principled design choice} that mirrors the generative process of SIDs.  
The hierarchical residual structure ensures that:
\begin{enumerate}[label=(\roman*)]
    \item Policy decisions are made in a semantically coherent order (coarse $\rightarrow$ fine);
    \item The state representation at each level reflects the remaining decision space;
    \item Gradient signals propagate through a stable, approximately non-expanding residual path (empirically supported by Table~\ref{tab:ablation}).
\end{enumerate}
We emphasize that the alignment is structural rather than numerical: the goal is not to replicate the SID generation path exactly, but to impose an inductive bias that faithfully reflects the semantics of the action space through shared token indexing, thereby enhancing learning stability and generalization.

\section{Additional Variance Analysis}

\begin{table}[ht]
\centering
\caption{Detailed performance with variance analysis.}
\label{tab:variance_analysis}
\small
\begin{tabular}{l|cc|cc}
\hline
\multirow{2}{*}{Method} & \multicolumn{2}{c|}{RL4RS} & \multicolumn{2}{c}{ML1M} \\
& Reward & Depth & Reward & Depth \\
\hline

HG    & 8.57 $\pm$ 0.15 & 9.87 $\pm$ 0.10 & 14.68 $\pm$ 0.09 & 15.42 $\pm$ 0.15 \\
KGCR  & 9.47 $\pm$ 0.13 & 10.76 $\pm$ 0.12 & 17.42 $\pm$ 0.22 & 17.52 $\pm$ 0.28 \\
CHIRP & 9.84 $\pm$ 0.28 & 10.93 $\pm$ 0.40 & 17.59 $\pm$ 0.22 & 17.95 $\pm$ 0.06 \\

HAC    & 10.59 $\pm$ 0.18 & 11.58 $\pm$ 0.16 & 17.59 $\pm$ 0.26 & 17.79 $\pm$ 0.24 \\
Ours  & \textbf{12.01 $\pm$ 0.22} & \textbf{12.84 $\pm$ 0.29} & \textbf{18.77 $\pm$ 0.14} & \textbf{18.84 $\pm$ 0.17} \\
\hline
\end{tabular}
\end{table}

To validate the stability of our experimental findings, we report the mean performance and standard deviations for HSRL and representative competitive baselines in Table~\ref{tab:variance_analysis}. This analysis demonstrates that the performance improvements achieved by HSRL are statistically significant rather than being artifacts of stochastic variation. While all reinforcement learning methods exhibit inherent variance, the performance margin between HSRL and the strongest baselines consistently and substantially exceeds the corresponding standard deviations. For instance, on the RL4RS dataset, the lower bound of HSRL's performance (mean minus standard deviation) remains well above the upper bound of the best baseline. These results confirm that the superior performance stems from our architectural design rather than advantageous initializations, which underscores the consistent reliability of our approach across different trials.

\section{LLM Usage Statement}

In the preparation of this manuscript, a large language model (LLM), such as ChatGPT, was only used to assist with text editing and refinement. All writing, experimental design, analysis, and final interpretation of results were independently conducted by the author team, who bear full responsibility for the content and conclusions of the paper. All authors take full responsibility for the integrity and accuracy of the manuscript.

\end{document}